\newtheorem{theorem}{Theorem}[section]
\newtheorem{lemma}[theorem]{Lemma}
\newtheorem{definition}[theorem]{Definition}	
\newtheorem{proposition}[theorem]{Proposition}
\newtheorem{remark}[theorem]{Remark}	
\newtheorem{problem}[theorem]{Problem}
\title{\LARGE \bf
Coverage and Field Estimation on Bounded Domains by \\ Diffusive Swarms
}
\author{Karthik Elamvazhuthi, Chase Adams, and Spring Berman
\thanks{*This work was supported by National Science Foundation (NSF)
award no. CMMI-1436960.}
\thanks{Karthik Elamvazhuthi, Chase Adams, and Spring Berman  are with the School for Engineering of
Matter, Transport and Energy, Arizona State University, Tempe, AZ, 85281
USA {\tt\small \{karthikevaz, Chase.Adams, Spring.Berman\}@asu.edu}}%
}
\begin{document}

\maketitle
\thispagestyle{empty}
\pagestyle{empty}

\begin{abstract}

In this paper, we consider stochastic coverage of bounded domains by a diffusing swarm of robots that take local measurements of an underlying scalar field. We introduce three control methodologies with diffusion, advection, and reaction as independent control inputs.  We analyze the diffusion-based control strategy using standard operator semigroup-theoretic arguments. We show that the diffusion coefficient can be chosen to be dependent only on the robots' local measurements to ensure that the swarm density converges to a function proportional to the scalar field. The boundedness of the domain precludes the need to impose assumptions on decaying properties of the scalar field at infinity. Moreover, exponential convergence of the swarm density to the equilibrium follows from properties of the spectrum of the semigroup generator. In addition, we use the proposed coverage method to construct a time-inhomogenous diffusion process and apply the observability of the heat equation to reconstruct the scalar field over the entire domain from observations of the robots' random motion over a small subset of the domain. We verify our results through simulations of the coverage scenario on a 2D domain and the field estimation scenario on a 1D domain. 

\end{abstract}


\section{INTRODUCTION}

Distributed control laws for multi-robot coverage strategies have been widely investigated \cite{cortes2002coverage,schwager2009decentralized}. Applications of coverage strategies include environmental monitoring, surveillance, source localization \cite{hayes2002distributed}, and vehicle scheduling \cite{pavone2011adaptive}.  In this work, we consider a variant of the coverage problem in which the goal is to achieve target coverage of an environment in a statistical sense. This is a significant departure from methods such as \cite{cortes2002coverage}, in which the robots are required to converge to a precise configuration in space and thus need more sophisticated sensing and control capabilities. Our approach can be applied to scenarios in which uncertainty in the strategy is beneficial, for instance in surveillance or source localization problems where the optimal coverage distribution is not known {\it a priori}.  It is also suitable for swarm robotic systems in which the severe resource constraints on the robots make it infeasible to implement global localization and extensive inter-robot communication.
 

Various stochastic methods for applications such as multi-robot task allocation and surveillance have been developed recently
\cite{acikmese2012markov,agharkar2014robotic,bandyopadhyay2013inhomogeneous,berman2009optimized,hamann2008framework,milutinovic2006modeling}. An important characteristic of many of these methods has been the index-free/permutation-invariant nature of the control laws, which can be beneficial for scalability in controller design \cite{leonard2013nonuniform,belta2004abstraction,kingston2011distributed}. This advantage of permutation invariance has led to multiple works on partial differential equation (PDE)-based multi-agent control, in which the Eulerian perspective of particles/agents is fundamental \cite{brockett2012notes,foderaro2014distributed,pimenta2013swarm}. 
The models that we present are largely based on those developed in our previous work \cite{berman2011design,elamvazhuthi2015optimal}, in which we used PDE optimization-based methods to synthesize robot controllers for stochastic coverage and task allocation problems in robotic swarms. In contrast to these works, our work here does not require knowledge of the target coverage distribution if this distribution depends on an environmental parameter, such as a scalar field that can be measured by the robots.

Our approach can be viewed as a variant of the method presented in \cite{mesquita2008optimotaxis} for unbounded domains. A similar problem was considered in \cite{huck2013stochastic} in the discrete-time case for agents with unicycle dynamics on bounded subsets of $\mathbb{R}^2$ for a source localization problem. The analysis in \cite{huck2013stochastic} proves the existence and uniqueness of some stationary distribution; however, it is unclear whether this distribution may be any desired invariant distribution. In our work, on the other hand, we introduce a family of control laws that can achieve any desired distribution that is uniformly bounded from above and below, thus enabling optimization over different coverage strategies, but we do not take robot kinematics into account. Given that controllable driftless systems can track a sufficiently rich set of trajectories arbitrarily well \cite{jean2014control}, we do not view this simplification as a significant disadvantage.  

In addition to the coverage strategy, we present a method for estimating the scalar field by observing random walks of robots over only a small subset of the domain. Our method exploits the (approximate) observability of the heat equation.  In this way, it is similar in approach to the work in \cite{devore2014recovery}, where the observability of the heat equation is used to recover the initial temperature of a rod from point measurements. Hence, our method relaxes the assumption, required by similar stochastic multi-agent approaches for estimating scalar fields \cite{mesquita2008optimotaxis,huck2013stochastic}, that agents are observed over the entire domain.  Our estimation method is suitable for independently operating, unidentified robots, unlike other multi-agent approaches to scalar field mapping, e.g. \cite{hayes2002distributed}, which rely on interactions between agents or require agents with unique identities.



Observability and controllability properties of the heat equation have been well-studied in the PDE control community \cite{curtain2012introduction,glowinski2008exact,zuazua2007controllability}. To characterize the uniqueness and stability of the desired invariant distribution of the stochastic process associated with the diffusion of the robots, we use the corresponding system of parabolic PDEs, which determine the evolution of transition probabilities over time. Toward this end, we consider the PDEs in an operator semigroup-theoretic framework, which enables a treatment of the PDEs as an abstract system of ordinary differential equations on an appropriately chosen function space.

Several complexities arise in the analysis of long-time behavior of linear semigroups on infinite-dimensional spaces, particularly the varied notion of the spectrum of a linear operator acting on these spaces. Even when the complete spectrum of the generator has been identified, the semigroup behavior might not be determined by spectral information alone. 


\section{PROBLEM FORMULATION} \label{sec:Problem}

We consider a swarm of agents that are deployed into a domain $\Omega$, a bounded convex open subset of $\mathbb{R}^n$ with Lipschitz continuous boundary $\partial \Omega$.  Each agent switches probabilistically between an {\it active} state, during which it explores the domain with a combination of deterministic and random motion, and a {\it passive} state, during which it stops to take a measurement.  The deterministic motion is governed by a time-dependent velocity $\mathbf{v}_1(t) \in \mathbb{R}^n$, and the random motion is represented as diffusion with an associated diffusion coefficient $v_2(t)$.  Diffusion can model probabilistic search, exploration, and tracking strategies or stochasticity arising from sensor and actuator noise.  An agent switches from the {\it active} state to the {\it passive} state at a time-dependent probability rate $v_3(t)$, and it switches back to the {\it active} state at a fixed probability rate $k$. The velocity $\mathbf{v}_1(t)$, diffusion coefficient $v_2(t)$, and state transition rate $v_3(t)$ are the control parameters of the system.  



Given these parameters, we can define a stochastic process  $(\mathbf{X}(t), Y(t))$, with state space $\Omega \times \lbrace 0,1 \rbrace$,  that models the motion of an agent with single-integrator dynamics and stochastic switching between states.  Here, $\mathbf{X}(t)$ is the position of the agent at time $t$, and $Y(t)$ is a switching variable that indicates whether the agent is in the {\it active} or {\it passive} state.  This variable is determined by the conditional probabilities $\mathbb{P}(Y(t+h) = 1 ~|~ Y(t) = 0) =  \int_h^{t+h} v_3(\tau) d \tau+ o(h^2) $, $\mathbb{P}(Y(t+h) = 0 ~|~ Y(t) = 1) = k h  + o(h^2)$.  $\mathbf{W}(t)$ is the standard Wiener process and $\psi(t)$ is the {\it reflecting function}, a process that characterizes the specular reflection of the agent at the boundary \cite{skorokhod1961stochastic,tanaka1979stochastic}. Then, the stochastic process $(\mathbf{X}(t), Y(t))$ satisfies a system of stochastic differential equations given by:  
\begin{align*}
&d\mathbf{X}(t) =  Y(t) \left(\mathbf{v}_1(t)dt +  \sqrt{2}v_2(t) d\mathbf{W} \right)  + d\mathbf{\psi}(t), \nonumber \\
&\mathbf{X}(0) = \mathbf{X}_0, \hspace{2mm}  Y(0) = Y_0.
\end{align*} 






We now present problems of coverage and estimation of an unknown scalar field $F : \Omega \rightarrow \mathbb{R}_+$ that is defined at each location $\mathbf{x} \in \Omega$.  We denote the normalized measure induced by the scalar field $F$ as $\mu_F$, where $\mu_F(d\mathbf{x}) = d\mathbf{x} \frac{F(\mathbf{x})}{\int_\Omega F(\mathbf{y})d\mathbf{y}}$. Here, $d\mathbf{x}$ is the Lesbesgue measure of the ``infinitesimal neighborhood" of $\mathbf{x}$. In addition, we define $\mu_{\mathbf{X}(t)}$ as the distribution associated with the random variable $\mathbf{X}(t)$ for each $t \geq 0$.  Our main objective is to design agent control laws that drive the swarm to a steady-state distribution that is proportional to the density of the field  $F(\mathbf{x})$, using only local measurements of the field.  We refer to this objective as a \textit{distributional controllability} problem and frame it as follows:  
\begin{problem} 
\label{prob:ADRctrl}
Determine whether there exist feedback control laws $D: \Omega \rightarrow \mathbb{R}$, $\mathbf{a} : \Omega \rightarrow \mathbb{R}^n$, and  $H: \Omega \rightarrow \mathbb{R}$ such that $\mu_{\mathbf{X}(t)}$ converges (weakly) to $\mu_F$ as $t \rightarrow \infty$ for the following stochastic process:
\begin{align}
\label{eq:chSDE} 
&d\mathbf{X}(t) =  Y(t)(\mathbf{a}(\mathbf{X}(t))dt +   \sqrt{2} D(\mathbf{X}(t)) d\mathbf{W})  + d\psi(t), \nonumber \\
&\mathbf{X}(0) = \mathbf{X}_0, \hspace{2mm}  Y(0) = Y_0, \hspace{4mm} t\geq0,
\end{align} 
where  $Y(t)$ is defined by the conditional probabilities $\mathbb{P}(Y(t+h) = 1 ~|~ Y(t) = 0) = \int_h^{t+h} H (X(\tau)) d \tau+ o(h^2) $, $\mathbb{P}(Y(t+h) = 0 ~|~ Y(t) = 1) = k h  + o(h^2)$.
\end{problem}
The spatiotemporal evolution of the population densities of agents that follow process \eqref{eq:chSDE} is described by a set of advection-diffusion-reaction PDEs.  We define $Q = \Omega \times (0,T) $ and $\Sigma = \partial\Omega \times (0,T) $ for some fixed final time $T$.   The vector $\mathbf{n}$ is the outward normal to $\partial\Omega$.  The densities of {\it active} and {\it passive} agents over the domain are denoted by $y_1(\mathbf{x},t)$ and $y_2(\mathbf{x},t)$, respectively.   Then the PDE model is given by 
\begin{eqnarray}
 \frac{\partial y_{1}}{\partial t} & = & \Delta(D(\mathbf{x})^2y_{1}) - \nabla \cdot (\mathbf{a(\mathbf{x})} y_{1}) \nonumber \\
 && - H(\mathbf{x})y_{1} + k y_{2} \hspace{2mm} ~~in \hspace{2mm} Q, \nonumber \\
  \hspace{0mm} \frac{\partial y_{2}}{\partial t} & = & H(\mathbf{x}) y_{1} - k y_{2}  \hspace{2mm} ~~in \hspace{2mm} Q,
\label{eq:Macro1}
\end{eqnarray} 
with the zero-flux boundary condition  
\begin{equation}
\mathbf{n} \cdot (\nabla (D(\mathbf{x})^2 y_1) - \mathbf{a(x)}y_1 )  = 0 \hspace{2mm} on \hspace{2mm} \Sigma
\label{eq:NFBC} 
\end{equation}
and initial conditions
\begin{equation}
y_1(\mathbf{x},0)= y_{10}(\mathbf{x}), \hspace{2mm} y_2(\mathbf{x},0) = y_{20} \hspace{2mm} on \hspace{2mm} \Omega.
\label{eq:MacIC}
\end{equation}




Using results from stochastic calculus \cite{pilipenko2014introduction},  we can determine that for the process satisfying \eqref{eq:chSDE}, we have $\mu_{\mathbf{X}(t)}(dx) =  d\mathbf{x}(y_1(\mathbf{x},t) +   y_2(\mathbf{x},t))$ for all $t \in [0,\infty)$. It can be shown that there are multiple sets of control laws that solve Problem \ref{prob:ADRctrl}. By setting the time derivatives of the PDEs in \eqref{eq:Macro1} equal to zero, for example, the choice $(D,\mathbf{a},H) = ( c_1/F^{1/2}+ c_2, ~c_2 \nabla F/F,~0)$ for any $c_1 >0$ and  $c_2 \geq 0$ makes the desired distribution invariant. 
 Additionally, $(D,\mathbf{a},H) = (c_1~,0~,c_2 F)$ approximates the target distribution within an arbitrary degree of accuracy for an appropriate choice of $c_1, c_2 > 0$. Note that each set of control laws only requires agents' local measurements of the scalar field $F(\mathbf{x})$. 
We leave the analysis of this general class of control laws to future work. 


For the remainder of this paper, we consider purely diffusion-based coverage, in which $\mathbf{a} = \mathbf{0}$ and $H \equiv 0$. Then the stochastic process \eqref{eq:chSDE} reduces to
\begin{equation}
\label{eq:Miceq}
d\mathbf{X}(t) =   \sqrt{2} D(\mathbf{X}(t)) d\mathbf{W}  + d \mathbf{\psi}(t).
\end{equation} 
The corresponding PDE model governs the density of {\it active} agents only, denoted here by $y(\mathbf{x},t)$:
\begin{eqnarray} 
\label{eq:Macdiff}
\frac{\partial y}{\partial t} & = & \Delta (D(\mathbf{x})^2 y) \hspace{2mm} ~~in \hspace{2mm} Q, \nonumber \\
\mathbf{n} \cdot \nabla (D(\mathbf{x})^2 y) & = & 0 \hspace{2mm} ~~on \hspace{2mm} \Sigma, \nonumber \\
y(\mathbf{x},0) & = & y_{0}(\mathbf{x})  \hspace{2mm} ~~in \hspace{2mm} \Omega.
\end{eqnarray} 

Given a swarm that performs diffusion-based coverage with an unknown control law $D(\mathbf{x})$, we want to additionally determine whether we can reconstruct this control law by observing the random motion of agents over a small subset of the domain. The estimation problem can be formulated as follows.  
Consider the SDE
\begin{equation}
\label{eq:ihetMiceq}
d\mathbf{X}(t) =   \sqrt{2} D(\mathbf{X}(t),t) d\mathbf{W}  + d \mathbf{\psi}(t), \hspace{2mm} t  \in [0,T_2],
\end{equation} 
where $D(\mathbf{x},t) = c/F(\mathbf{x})^{1/2}$ over time $t \in [0,T_1]$,  $T_1 \leq T_2$, for some $c>0$, and $D(\mathbf{x},t) = d>0$ otherwise.

\begin{problem}
\label{Pr2}
Let $O \subset \Omega$ be an open set, $G$ be a finite measurable partition of $O$, and $\lbrace \mathbf{X}^i(t) \rbrace$ be a set of $N$  i.i.d. random variables. Given $y_\omega(t) = \sum_{1 \leq i \leq N} \mathbf{1}_{\omega}(\mathbf{X}^i(t))/N$ for each $\omega \in G$, determine whether there exists a unique map $F: \Omega \rightarrow \mathbb{R_+}$ such that $\big \lbrace \mathbf{X}^i(t) \big \rbrace $ have the same distribution as the process satisfying \eqref{eq:ihetMiceq}.
\end{problem}

Note that in this problem, the identities of the agents are not important. Thus, similar to index-free control strategies, we can pose estimation problems in an index-free setting (e.g. \cite{elamvazhuthi2014variational,ramachandranoptimal,zeng2015ensemble}).  

\section{PRELIMINARIES} \label{sec:prelim}
In this section, we recall some standard notions from the theory of operator semigroups \cite{tucsnak2009observation}.
Let $H$ be a Hilbert space and $\mathcal{L}(H)$ be the space of bounded operators on $H$. 
\begin{definition}
A family $\mathbb{T} = ( \mathbb{T}(t) )_{t \geq 0}$ of operators in $\mathcal{L}(H)$ is a \textbf{strongly continuous semigroup} on $H$ if 
\begin{enumerate}
\item{$\mathbb{T}(0) = I$}
\item{$\mathbb{T}(t+\tau) = \mathbb{T}(t) \mathbb{T}(\tau)$}
\item{$lim_{t \rightarrow 0} \mathbb{T}(t)z =z \hspace{4mm} \forall z \in H $}
\end{enumerate}
\end{definition}
\begin{definition}
The linear operator $A : \mathcal{D}(A) \rightarrow H$, defined by
\begin{align*}
\mathcal{D}(A) &= \bigg \lbrace z \in H : \lim_{t \rightarrow 0, t >0} \frac{\mathbb{T}(t)z -z}{t}  \hspace{2mm} exists \bigg \rbrace, \\ 
Az &= \lim_{t \rightarrow 0, t >0} \frac{\mathbb{T}(t)z -z}{t} \hspace{4mm} \forall z \in \mathcal{D}(A),
\end{align*}
is called the \textbf{infinitesimal generator} (or just the \textbf{generator}) of the semigroup $\mathbb{T}$.
\end{definition}
\begin{remark}
Whenever we refer to the generator in this paper, we refer to the adjoint of the generator of the stochastic process. 
\end{remark}
\begin{definition}
An unbounded linear operator $A: \mathcal{D}(A) \rightarrow H$ is said to be \textbf{dissipative} if 
\begin{equation}
Re \big < Av, v \big > \leq 0 \hspace{4mm} \forall v \in \mathcal{D}(A)
\end{equation}
\end{definition}
Let $w \in L^{\infty}(\Omega)$ such that $\frac{1}{w} \in L^{\infty}(\Omega)$, i.e. $w$ is an essentially bounded Lesbesque measurable real-valued function with an essentially bounded inverse. Additionally, assume that $w$ is positive almost everywhere (a.e.) on $\Omega$. We define the space of square-integrable, real-valued measurable functions on $\Omega$, $L^2_w(\Omega)$, with the weighted 2-norm
\begin{equation}
\|f\|_{2w} = \bigg ( \int_{\Omega} |f(\mathbf{x})|^2 w(\mathbf{x})d\mathbf{x} \bigg )^{1/2},
\end{equation}
which is induced by the inner product $(\cdot,\cdot)_w:L^2_w(\Omega) \times L^2_w(\Omega) \rightarrow \mathbb{R}$, defined as
\begin{equation}
(f,g)_w = \int_{\Omega} f(\mathbf{x})g(\mathbf{x}) w(\mathbf{x})d\mathbf{x}.
\end{equation}
By Holder's inequality, we have that $c_1 \|f\|_{2w}  \leq \|f\|_{2} \leq c_2 \|f\|_{2w} $ for some $c_1 >0$ and $c_2 >0$. Hence, $L^2(\Omega) \simeq L^2_w(\Omega)$, i.e. the spaces are isomorphic.
 
In the forthcoming definitions, all derivatives with respect to spatial variables are to be understood as weak/distributional derivatives.
Define $H^1_w(\Omega)$ as
\begin{equation}
\begin{split}
H^1_w(\Omega) = \bigg \lbrace f \in L^2_w(\Omega) : \frac{\partial}{\partial x_{\alpha} } (w f) & \in L^2_w(\Omega) \\
 \forall \alpha \in \lbrace 1,2,....n \rbrace \bigg \rbrace. &
\end{split}
\end{equation}
The norm on this space is induced by the inner product 
\begin{equation} \label{IP1}
\begin{split}
(f,g)_{H^1_w(\Omega)} & = \int_{\Omega} f(\mathbf{x})g(\mathbf{x}) w(\mathbf{x})d\mathbf{x}  \\
 & \hspace{2mm} +~ \sum_{i =1}^{n} \int_{\Omega} \frac{\partial (w f)}{\partial x_i } (\mathbf{x}) \frac{\partial (w g)}{\partial x_i }(\mathbf{x}) d\mathbf{x}.
\end{split}
\end{equation}
Additionally, we define the space 
\begin{equation}
\begin{split}
H^2_w(\Omega) = \bigg \lbrace f \in H^1_w(\Omega) : \frac{\partial^2 (w f)}{\partial x^2_{\alpha}}  & \in L^2_w(\Omega) \\
 \forall \alpha \in \lbrace 1,2,....n \rbrace \bigg \rbrace, &
\end{split}
\end{equation}
which is equipped with the inner product 
\begin{equation} \label{IP2}
\begin{split}
(f,g)_{H^2_w(\Omega)} & = \int_{\Omega} f(\mathbf{x})g(\mathbf{x}) w(\mathbf{x})d\mathbf{x}  \\
& \hspace{2mm} + \sum_{j =1}^{2} \sum_{i =1}^{n} \int_{\Omega} \frac{\partial^j (w f)}{\partial x^j_i } (\mathbf{x}) \frac{\partial^j (w g)}{\partial x^2_i }(\mathbf{x}) d\mathbf{x}
\end{split}
\end{equation}
Note that for $w \equiv 1$, $H^1_w(\Omega)$ and $H^2_w(\Omega)$ are the same as the traditional Sobolev spaces $H^1(\Omega)$ and $H^2(\Omega)$, respectively. 
We can then consider the PDE \eqref{eq:Macdiff} as an abstract system of ordinary differential equations on $L^2_w(\Omega)$,
\begin{eqnarray}
\label{eq:AbsMacro}
\dot{\mathbf{y}}(t) &=& A \mathbf{y}(t) \hspace{4mm} (t \geq 0) \\  \nonumber
\mathbf{y}(0) &=& \mathbf{y}_0
\end{eqnarray} 
\begin{equation} 
\label{eq:PDGen}
Af = \Delta (wf),
\end{equation}
with $\mathcal{D}(A) = \bigg \lbrace f \in H^2_w : \mathbf{n} \cdot \nabla(wf(\mathbf{x})) = 0 \hspace{2mm} \forall \mathbf{x} \in \partial \Omega  \bigg \rbrace $ and the corresponding norm. The requirements on the behavior of functions in $\mathcal{D}(A)$ on the boundary of $\Omega$ are to be understood in the ``trace sense.'' Since $\mathcal{D}(A)$  is a subset of $H^2_w(\Omega)$ and $\partial \Omega$ is at least Lipschitz, the trace operation corresponding to the normal derivative is well-defined. From here on, we focus our analysis on the system \eqref{eq:Macdiff}. We establish that the operator defined in \eqref{eq:PDGen} generates a semigroup on $L^2_w(\Omega)$.

The main advantage of considering the weighted space $L^2_w(\Omega)$, rather than $L^2(\Omega)$, is that the operator $A$ as defined in \eqref{eq:PDGen} is self-adjoint as an operator on the former function space, and hence this simplifies much of the analysis.

\begin{remark}
By working in the $L^2$ framework, we are tacitly assuming that the distribution function of the initial condition of the stochastic process is square-integrable. Due to the inclusion of $L^2(\Omega)$ in $L^1(\Omega)$, whenever $\Omega$ is a bounded domain, this assumption is not too restrictive. Moreover, given the ultracontractivity of the semigroup of interest (not proved in this work), initial conditions in $L^1(\Omega)$ are mapped to $L^{\infty}(\Omega) \subset L^2(\Omega) \subset L^1(\Omega)$ for any $t >0$.
\end{remark}

\vspace{0mm}

\section{ANALYSIS} \label{sec:analysis}


\subsection{Coverage} \label{sec:Coverage}

In this section, we derive a result (Theorem \ref{covTheorem}) that the choice of the control law $D(\mathbf{X}) = c/F(\mathbf{X})^{1/2}$ in process \eqref{eq:Miceq}, where $c > 0$, yields $lim_{t \rightarrow \infty}\mu_{\mathbf{X}(t)}  \rightarrow \mu_F$ as specified in Problem \ref{prob:ADRctrl}.  In fact, we establish a stronger form of convergence than the convergence required by Problem \ref{prob:ADRctrl}. Hence, the agent control law that solves the coverage problem for a purely diffusive swarm is dependent only on pointwise observations of the scalar field $F$. 

We first introduce several results that are needed to prove Theorem \ref{covTheorem}.  Here, the operator $A$ has the definition in equation \eqref{eq:PDGen}.

First, we establish that $A$ generates a semigroup, and hence a unique mild solution of the PDE \eqref{eq:Macdiff} exists.

\begin{proposition}
$A$ is a dissipative operator and generates a strongly continuous semigroup on $L^2_w(\Omega)$.
\end{proposition}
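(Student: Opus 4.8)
The plan is to invoke the Lumer--Phillips theorem: a densely defined dissipative operator $A$ on a Hilbert space whose resolvent range satisfies $\mathrm{Range}(\lambda I - A) = L^2_w(\Omega)$ for some $\lambda > 0$ generates a strongly continuous semigroup of contractions. So I would verify three ingredients in turn: density of $\mathcal{D}(A)$, dissipativity, and the range (resolvent) condition. For density, I note that any $u \in C_c^\infty(\Omega)$ vanishes near $\partial\Omega$ and hence $f = u/w$ satisfies the boundary condition and lies in $\mathcal{D}(A)$ (since $wf = u$ is smooth). Because $1/w \in L^\infty(\Omega)$, the estimate $\|u_n/w - f\|_{2w}^2 \le \|1/w\|_\infty \, \|u_n - wf\|_2^2$ combined with the density of $C_c^\infty(\Omega)$ in $L^2(\Omega) \simeq L^2_w(\Omega)$ shows that $\mathcal{D}(A)$ is dense.

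The decisive device throughout is the substitution $u = wf$, which converts the weighted inner product into an unweighted integral and exposes the underlying Neumann Laplacian. For dissipativity, for $f \in \mathcal{D}(A)$ one computes
\[
(Af, f)_w = \int_\Omega \Delta(wf)\, f\, w \, d\mathbf{x} = \int_\Omega \Delta(u)\, u \, d\mathbf{x},
\]
and integrating by parts, using the boundary condition $\mathbf{n}\cdot\nabla(wf) = \partial u/\partial n = 0$ on $\partial\Omega$ (meaningful in the trace sense since $u \in H^2$ on a Lipschitz domain), kills the boundary term and gives $(Af,f)_w = -\int_\Omega |\nabla(wf)|^2 \, d\mathbf{x} \le 0$. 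Hence $A$ is dissipative.

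The real work is the range condition: given $g \in L^2_w(\Omega)$ and a fixed $\lambda > 0$, solve $\lambda f - \Delta(wf) = g$. Setting $u = wf$ recasts this as the elliptic Neumann problem $-\Delta u + (\lambda/w)\, u = g$ with $\partial u/\partial n = 0$. The bilinear form $B(u,\phi) = \int_\Omega \nabla u \cdot \nabla \phi \, d\mathbf{x} + \lambda \int_\Omega w^{-1} u \phi \, d\mathbf{x}$ on $H^1(\Omega)$ is bounded, because $w^{-1} \in L^\infty(\Omega)$, and coercive, because $w^{-1} \ge \|w\|_\infty^{-1} > 0$ forces $B(u,u) \ge \min(1,\, \lambda\|w\|_\infty^{-1})\,\|u\|_{H^1}^2$; since $g \in L^2_w \simeq L^2$ the functional $\phi \mapsto \int_\Omega g\phi\, d\mathbf{x}$ is continuous on $H^1(\Omega)$. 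Lax--Milgram then yields a unique weak solution $u \in H^1(\Omega)$, and recalling that $f \in H^1_w$ is equivalent to $u = wf \in H^1$, this already places $f$ in $H^1_w(\Omega)$.

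I expect the main obstacle to be promoting this weak solution to $u \in H^2(\Omega)$, which is what is needed for $f = u/w$ to genuinely lie in $\mathcal{D}(A) \subset H^2_w(\Omega)$ and for the Neumann trace $\mathbf{n}\cdot\nabla(wf)$ to be well defined. Global $H^2$ regularity for the Neumann Laplacian fails on merely Lipschitz domains in general, so here I would explicitly invoke the \emph{convexity} of $\Omega$ assumed in Section~\ref{sec:Problem}, which supplies the global $H^2$ a~priori estimate on convex domains. With $u \in H^2(\Omega)$ established, $f = u/w \in \mathcal{D}(A)$ solves $(\lambda I - A)f = g$, so the range condition holds, and the Lumer--Phillips theorem delivers the claimed generation of a strongly continuous (contraction) semigroup on $L^2_w(\Omega)$.
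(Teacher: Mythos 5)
Your proposal is correct, and it rests on the same pillar as the paper's own proof: the Lumer--Phillips theorem (cited there as \cite{engel2000one}[Chapter II, Corollary 3.2]), verified by combining dissipativity with a Lax--Milgram argument for the range condition. The execution, however, differs in three ways, each of which makes your version tighter. First, you transplant the problem to the unweighted setting via the substitution $u = wf$, so that everything reduces to the standard Neumann Laplacian plus the bounded potential $\lambda/w$; the paper instead works directly with the weighted form $B(u,v) = \int_\Omega \nabla(wu)\cdot\nabla(wv)\,d\mathbf{x}$ on $H^1_w(\Omega)$. Second, you keep the $\lambda$-term inside the bilinear form, which makes it coercive on all of $H^1(\Omega)$ at once; the paper proves coercivity of $B$ only on the weighted-mean-zero subspace and then dismisses general data with ``by a similar argument,'' a step that is more delicate than it looks, since solving $B(u,v)=(f,v)_w$ is a statement about the range of $-A$ rather than of $I-A$ and carries a compatibility condition on $f$ (obtained by testing against $v = 1/w$) that the paper does not state correctly. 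Third---and this is your substantive addition---you supply the elliptic-regularity step: the paper merely asserts that ``each solution $u$ is in fact in $\mathcal{D}(A)$,'' whereas you correctly identify that promoting the $H^1$ weak solution to $H^2$ is precisely where the convexity of $\Omega$ assumed in Section~\ref{sec:Problem} enters, because global $H^2$ regularity for the Neumann problem can fail on general Lipschitz domains. Your explicit density check is also absent from the paper; it is in fact automatic here, since a dissipative operator satisfying the range condition on a reflexive space is densely defined, but including it costs nothing. In short: same theorem, same two ingredients, but your proof closes the regularity and solvability gaps that the paper's sketch leaves open.
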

\begin{proof}
Using integration by parts, it can be verified that for each $z \in L^2_w(\Omega)$, $\big<Az,z \big>_{L^2_w(\Omega)} \leq 0$. Hence, $A$ is a dissipative operator.
To show that $A$ generates a strongly continuous semigroup on $L^2_w(\Omega)$, we first define the bilinear form $B : H^1_w \times H^1_w \rightarrow \mathbb{R} $ by
\begin{equation}
\begin{split}
B(u,v) & = \int_\Omega \nabla (w(\mathbf{x})u(\mathbf{x})) \cdot \nabla (w(\mathbf{x})v(\mathbf{x})) d\mathbf{x} \\
\end{split}
\end{equation}
Then we have that
\begin{equation}
|B(u,v)| \leq \|u\|_{H^1_w(\Omega)} \|v\|_{H^1_w(\Omega)}
\end{equation}
for all $u,v \in H^1_w(\Omega)$.
In addition, for  all $u \in H^1_w(\Omega)$ such that $\int_\Omega w(\mathbf{x}) u(\mathbf{x})d\mathbf{x} = 0$, the following inequality holds for some $c>0$:
\begin{equation}
|B(u,u)| > c \|u\|^2_{H^1_w(\Omega)}.
\end{equation}

Then, by the Lax-Milgram theorem \cite{brezis2010functional}[Corollary 5.8], we can state that for each $f \in L^2_w(\Omega)$ such that $\int_\Omega w(\mathbf{x})f(\mathbf{x})d\mathbf{x} = 0$, there exists a unique solution $u \in H^1_w(\Omega)$ to the equation
\begin{equation}
B(u,v)  =  (f,v)_{L^2_w(\Omega)}
\end{equation}
for all $v \in H^1_w(\Omega)$. By a similar argument, a solution $u \in H^1_w(\Omega)$ exists for each $f \in L^2_w(\Omega)$.

Moreover, each solution $u$ is in fact in $\mathcal{D}(A)$. Hence, we have that $R(I - A) = L^2_w(\Omega)$, where $R(\cdot)$ denotes the range of the operator.  Therefore, the result follows from the dissipativeness of $A$ and \cite{engel2000one}[Chapter II, Corollary 3.2].
\end{proof}


\begin{proposition}
\label{Cresol}
The unbounded operator $A$ in \eqref{eq:PDGen} has a compact resolvent.
\end{proposition}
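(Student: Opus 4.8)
The plan is to exhibit a single resolvent operator $(\lambda I - A)^{-1}$ as the composition of a bounded map into $H^1_w(\Omega)$ with a compact embedding back into $L^2_w(\Omega)$. By the preceding proposition $A$ is dissipative and generates a strongly continuous (contraction) semigroup, so $(0,\infty) \subset \rho(A)$; in particular the resolvent is a well-defined bounded operator on $L^2_w(\Omega)$ for, say, $\lambda = 1$. Since compactness of $(\lambda_0 I - A)^{-1}$ for a single $\lambda_0 \in \rho(A)$ propagates to every $\lambda \in \rho(A)$ via the resolvent identity $(\lambda I - A)^{-1} - (\lambda_0 I - A)^{-1} = (\lambda_0 - \lambda)(\lambda I - A)^{-1}(\lambda_0 I - A)^{-1}$, it suffices to treat one value of $\lambda$.

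First I would establish the bounded mapping $(\lambda I - A)^{-1} : L^2_w(\Omega) \to H^1_w(\Omega)$. Fix $\lambda > 0$ and $f \in L^2_w(\Omega)$, and set $u = (\lambda I - A)^{-1} f \in \mathcal{D}(A)$. As in the weak formulation used in the preceding proposition, $u$ satisfies $\lambda (u,v)_{L^2_w(\Omega)} + B(u,v) = (f,v)_{L^2_w(\Omega)}$ for all $v \in H^1_w(\Omega)$. Taking $v = u$ and using the identity $\|u\|^2_{H^1_w(\Omega)} = \|u\|^2_{2w} + B(u,u)$, which is immediate from \eqref{IP1} and the definition of $B$, yields $\min(\lambda,1)\,\|u\|^2_{H^1_w(\Omega)} \leq \lambda (u,u)_{L^2_w(\Omega)} + B(u,u) = (f,u)_{L^2_w(\Omega)} \leq \|f\|_{2w}\,\|u\|_{H^1_w(\Omega)}$. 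Hence $\|u\|_{H^1_w(\Omega)} \leq \min(\lambda,1)^{-1}\|f\|_{2w}$, so the resolvent is bounded from $L^2_w(\Omega)$ into $H^1_w(\Omega)$. Note that the coercivity needed here is automatic and does not require the mean-zero constraint of the preceding proposition, since the $\lambda$-term supplies control of $\|u\|_{2w}$ while $B$ supplies the gradient term.

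Next I would show that $H^1_w(\Omega)$ embeds compactly into $L^2_w(\Omega)$. The device is the multiplication map $\Phi : f \mapsto w f$. Because $w, 1/w \in L^\infty(\Omega)$ and $w > 0$ a.e. — the same hypotheses used in the Preliminaries to identify $L^2(\Omega)$ with $L^2_w(\Omega)$ — the map $\Phi$ is a Banach-space isomorphism of $L^2_w(\Omega)$ onto $L^2(\Omega)$; and since the weak derivatives appearing in the definition of $H^1_w(\Omega)$ are precisely those of $wf$, the same $\Phi$ is an isomorphism of $H^1_w(\Omega)$ onto the standard Sobolev space $H^1(\Omega)$ with equivalent norms. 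Consequently the inclusion $H^1_w(\Omega) \hookrightarrow L^2_w(\Omega)$ factors as $\Phi^{-1} \circ \iota \circ \Phi$, where $\iota : H^1(\Omega) \hookrightarrow L^2(\Omega)$ is the usual inclusion. Since $\Omega$ is bounded with Lipschitz boundary, $\iota$ is compact by the Rellich--Kondrachov theorem, and composing with the bounded maps $\Phi, \Phi^{-1}$ leaves the inclusion compact.

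Combining the two steps, $(\lambda I - A)^{-1}$ acts on $L^2_w(\Omega)$ as a bounded map into $H^1_w(\Omega)$ followed by the compact inclusion $H^1_w(\Omega) \hookrightarrow L^2_w(\Omega)$, hence is compact; this proves the proposition. I expect the only delicate point to be the second step, not the energy estimate: because $H^1_w(\Omega)$ is defined through the weak derivatives of the weighted function $wf$ rather than of $f$ itself, the reduction to the standard Sobolev space cannot be done by a naive norm comparison with $f$ but must go through $\Phi$, and it is there that the Lipschitz regularity of $\partial\Omega$ is genuinely used in order to invoke Rellich--Kondrachov.
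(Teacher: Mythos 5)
Your proof is correct, but it follows a genuinely different route than the paper's. The paper never touches the resolvent operator directly: it argues that the embedding $\mathcal{D}(A) \hookrightarrow L^2_w(\Omega)$ is compact, by conjugating with the multiplication map $f \mapsto wf$ to identify $\mathcal{D}(A)$ with the domain of the Neumann Laplacian $\mathcal{D}(\Delta_N) \subset H^2(\Omega)$, using compactness of the embedding of $\mathcal{D}(\Delta_N)$ into $L^2(\Omega)$, and mapping back via $f \mapsto f/w$; compactness of the resolvent is then the standard consequence of compactness of the domain embedding (a step the paper leaves implicit). You instead factor the resolvent itself: the energy estimate from the weak formulation gives boundedness of $(\lambda I - A)^{-1} : L^2_w(\Omega) \to H^1_w(\Omega)$, and the same $w$-conjugation trick, now at the $H^1$ level, combined with Rellich--Kondrachov gives compactness of $H^1_w(\Omega) \hookrightarrow L^2_w(\Omega)$. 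The shared ingredient is the isomorphism $\Phi: f \mapsto wf$ onto unweighted spaces; the difference is the intermediate space ($H^1_w(\Omega)$ for you, $\mathcal{D}(A) \simeq \mathcal{D}(\Delta_N)$ for the paper) and where the work goes. Your version buys two things: the boundedness step is explicit and quantitative, whereas the paper gets it for free only if $\mathcal{D}(A)$ is read with the graph norm rather than the $H^2_w$ norm, a point it glosses over; and you need only first-order compactness ($H^1(\Omega) \hookrightarrow L^2(\Omega)$ on a bounded Lipschitz domain), avoiding any reliance on $H^2$-regularity of the Neumann problem, which on merely Lipschitz boundaries is a delicate matter. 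The paper's version is shorter, since boundedness of the resolvent into the domain is automatic; yours is more self-contained, reuses the bilinear form $B$ from the preceding proposition, and correctly notes that the mean-zero coercivity constraint there is not needed once $\lambda > 0$ supplies control of the $L^2_w$ term.
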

\begin{proof}
Consider the Neumann Laplacian $\Delta_N$, defined on $L^2(\Omega)$, with domain $\mathcal{D}(\Delta_N) = \big \lbrace f \in H^2 : \mathbf{n} \cdot \nabla(f(\mathbf{x})) = 0$ $\forall \mathbf{x} \in \partial \Omega  \big \rbrace $. The embedding $\gamma:\mathcal{D}(A) \rightarrow L^2_w(\Omega)$ can be written as a composition of maps $i \circ j \circ k$. Here, $i$ is the natural isomorphism $f \mapsto w \cdot f$ from $\mathcal{D}(A)$ to $\mathcal{D}(\Delta_N)$, $j$ is the embedding of $\mathcal{D}(\Delta_N)$ into $L^2(\Omega)$, and $k$ is the isomorphism $f \mapsto f/w$ from $L^2(\Omega)$ to $L^2_w(\Omega)$. Since $i$ is a compact operator,  for any bounded sequence $(u_n) \in \mathcal{D}(A)$ there exists a subsequence $(u_m)$ such that $\gamma(u_m) = (i \circ j \circ k)(u_m)$ is a convergent sequence in $L^2_w(\Omega)$. Hence, the embedding $\mathcal{D}(A) \hookrightarrow L^2_w(\Omega)$ is compact.
\end{proof}

\begin{lemma}
$A$ generates an immediately compact semigroup.
\label{ECLemm}
\end{lemma}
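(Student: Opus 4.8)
The plan is to reduce the claim to two properties: compactness of the resolvent, which we already have, and immediate norm continuity of the semigroup. Recall the standard characterization from semigroup theory \cite{engel2000one}: a strongly continuous semigroup $\mathbb{T}$ with generator $A$ is immediately compact (i.e.\ $\mathbb{T}(t)$ is compact for every $t>0$) if and only if $\mathbb{T}$ is immediately norm continuous and $A$ has compact resolvent. Since the first proposition already shows that $A$ generates a strongly continuous semigroup $\mathbb{T}$, and Proposition \ref{Cresol} supplies compactness of the resolvent, everything reduces to showing that $t \mapsto \mathbb{T}(t)$ is norm continuous on $(0,\infty)$.

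To obtain immediate norm continuity I would argue that $\mathbb{T}$ is in fact an \emph{analytic} semigroup, since analytic semigroups are automatically norm continuous (indeed infinitely norm differentiable) for $t>0$. The key structural fact, highlighted in Section \ref{sec:prelim}, is that on the weighted space $L^2_w(\Omega)$ the operator $A$ defined in \eqref{eq:PDGen} is self-adjoint. Combined with the dissipativeness established above, we have $A = A^*$ together with $\big< Az,z \big>_{L^2_w(\Omega)} \leq 0$, so $A$ is a self-adjoint operator that is bounded above by $0$. By the spectral theorem the spectrum of $A$ is then contained in $(-\infty,0]$ and $A$ is sectorial of angle $\pi/2$; hence $A$ generates a bounded analytic semigroup. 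This gives the required immediate norm continuity.

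Assembling the pieces, $\mathbb{T}$ is immediately norm continuous (from analyticity) while $A$ has compact resolvent (Proposition \ref{Cresol}), so by the cited characterization $\mathbb{T}(t)$ is compact for all $t>0$; that is, $A$ generates an immediately compact semigroup.

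I expect the main obstacle to be the rigorous justification of analyticity, which rests entirely on self-adjointness in the weighted inner product $(\cdot,\cdot)_w$. Once self-adjointness is in hand, the sectoriality estimate for a bounded-above self-adjoint operator is standard, and the remaining assembly through the resolvent-compactness/norm-continuity characterization is routine; an alternative, equivalent route is to invoke directly the result that an analytic semigroup whose generator has compact resolvent is immediately compact.
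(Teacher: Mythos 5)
Your proof is correct and follows essentially the same route as the paper: the paper also deduces analyticity of the semigroup from the self-adjointness and negativity of $A$ on $L^2_w(\Omega)$, concludes immediate norm continuity, and then combines this with the compact resolvent from Proposition \ref{Cresol} via the characterization in \cite{engel2000one}[Chapter II, Theorem 4.29]. Your write-up merely makes explicit the sectoriality argument behind the analyticity claim, which the paper leaves implicit.
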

\begin{proof}
Let $(\mathbb{T}(t))_{t>0}$ be the strongly continuous semigroup generated by $A$. First, we note that $(\mathbb{T}(t))_{t>0}$ is analytic since $A$ is self-adjoint and negative. Hence, $(\mathbb{T}(t))_{t>0}$ is immediately norm continuous \cite{engel2000one}. Moreover, we note that $A$ has a compact resolvent from Proposition \ref{Cresol}. The result then follows from \cite{engel2000one}[Chapter II, Theorem 4.29]. 
\end{proof}

Finally, using the results above, we can demonstrate that the unique equilibrium of \eqref{eq:AbsMacro} is exponentially stable.


\begin{theorem} \label{covTheorem}
For any $y_0 \in L^2_w(\Omega)$ such that $y_0 \geq 0$ (i.e., $y_0$ is positive a.e. on $\Omega$), the semigroup $(\mathbb{T}(t))_{t\geq 0}$ generated by $A$ satisfies 
\begin{equation}
\|\mathbb{T}(t)y_0 - c/w\|_{2w} ~\leq~ M e^{-\omega t} \| y_0 - c/w\|_{2w}
\label{eq:stab}
\end{equation}
for all $t\geq 0$, some $\omega, c>0$, and $M \geq 1$.
\end{theorem}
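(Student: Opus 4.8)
The plan is to exploit the fact, already noted before the statement, that $A$ is self-adjoint and negative on $L^2_w(\Omega)$, together with Proposition \ref{Cresol} (compact resolvent), so that the spectral theorem applies and the whole estimate reduces to locating the spectrum. First I would show that $\sigma(A)$ is a discrete set of real, nonpositive eigenvalues of finite multiplicity accumulating only at $-\infty$: this is immediate from self-adjointness (real spectrum), dissipativity (nonpositivity), and the compact resolvent (discreteness). The crucial structural step is to pin down the kernel. If $Af = \Delta(wf) = 0$ with $\mathbf{n}\cdot\nabla(wf)=0$ on $\partial\Omega$, then setting $u = wf$ and integrating $u\,\Delta u$ by parts gives $\int_\Omega |\nabla u|^2\,d\mathbf{x} = 0$, so $u$ is constant on the connected domain $\Omega$ and hence $f = c/w$ for some constant $c$. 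Thus $0$ is an eigenvalue whose eigenspace is exactly one-dimensional, spanned by $1/w$, and because the spectrum is discrete the next eigenvalue $\lambda_1$ is strictly negative. I would set $\omega := |\lambda_1| > 0$.

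Next I would let the spectral gap do the work through the orthogonal splitting $L^2_w(\Omega) = \ker A \oplus (\ker A)^\perp$. Let $P$ denote the orthogonal projection onto $\ker A = \mathrm{span}\{1/w\}$. The key computation is that the weight cancels in the $L^2_w$ inner product: $(y_0, 1/w)_w = \int_\Omega y_0(\mathbf{x})\,d\mathbf{x}$, the total mass, so that $Py_0 = \big( \int_\Omega y_0\,d\mathbf{x} \big/ \int_\Omega w^{-1}\,d\mathbf{x}\big)\,w^{-1} =: c/w$, and $c>0$ precisely because $y_0$ positive a.e. forces $\int_\Omega y_0 > 0$. Since $c/w \in \ker A$ it is a fixed point of the semigroup, and since $A$ is self-adjoint, $\mathbb{T}(t)$ commutes with $P$ and leaves $(\ker A)^\perp$ invariant. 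Hence $\mathbb{T}(t)y_0 - c/w = \mathbb{T}(t)(I-P)y_0$, and $(I-P)y_0 = y_0 - c/w$.

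Finally, I would bound the semigroup on the invariant complement by the spectral theorem. Because $A$ restricted to $(\ker A)^\perp$ has spectrum contained in $(-\infty, \lambda_1]$ with $\lambda_1 = -\omega$, the functional calculus yields $\|\mathbb{T}(t)(I-P)y_0\|_{2w} \leq e^{-\omega t}\,\|(I-P)y_0\|_{2w}$ for all $t \geq 0$. Substituting the two identities from the previous step gives exactly \eqref{eq:stab}, with $M=1$ (the statement's allowance $M \geq 1$ leaves room to argue instead via the spectral mapping theorem for the immediately compact semigroup of Lemma \ref{ECLemm}, rather than the self-adjoint functional calculus). I expect the main obstacle to be the spectral-gap argument itself: rigorously establishing that $0$ is a simple, isolated eigenvalue and that the semigroup decay rate on $(\ker A)^\perp$ is governed by $\lambda_1$. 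Self-adjointness makes this clean, but were one to work with the adjoint process on $L^2(\Omega)$ instead, one would have to transfer self-adjointness through the isomorphism $L^2(\Omega) \simeq L^2_w(\Omega)$, or else invoke the spectral mapping theorem, valid here by immediate compactness, to pass from the spectral bound of $A$ to the growth bound of $\mathbb{T}(t)$.
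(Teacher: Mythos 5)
Your proof is correct, but it takes a genuinely different route from the paper's. The paper (i) establishes mass conservation as a separate step, by showing the functional $R_t u = \int_\Omega (\mathbb{T}(t)u - u)\,d\mathbf{x}$ vanishes on the dense subspace $\mathcal{D}(A)$ via Green's formula and extends to the zero map on $L^2_w(\Omega)$, which is what pins down the constant $c$; (ii) identifies the kernel of $A$ through a bijective correspondence with the constant eigenfunctions of the Neumann Laplacian (you do this directly by integrating $u\Delta u$ by parts with $u = wf$, which is equivalent); and (iii) obtains the exponential estimate from the asymptotic theory of eventually compact semigroups --- immediate compactness from Lemma \ref{ECLemm}, the fact that $0$ is a first-order pole, and \cite{engel2000one}[Chapter V, Corollary 3.3] --- which yields \eqref{eq:stab} with some unspecified $M \geq 1$. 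You instead exploit self-adjointness fully: the spectral theorem plus the compact resolvent of Proposition \ref{Cresol} gives a discrete, real, nonpositive spectrum; the orthogonal splitting $L^2_w(\Omega) = \ker A \oplus (\ker A)^\perp$ makes mass conservation automatic (your observation that the weight cancels, $(y_0, 1/w)_w = \int_\Omega y_0\,d\mathbf{x}$, is exactly the paper's conservation law in disguise, and your projection formula recovers precisely the paper's choice $c = \int_\Omega y_0\, d\mathbf{x} \big/ \int_\Omega w^{-1}\, d\mathbf{x}$); and the functional calculus on the invariant complement gives decay with the sharper constant $M = 1$. The trade-off is robustness versus sharpness: the paper's Engel--Nagel argument does not depend on symmetry of the generator, so it is the natural template for the advection- and reaction-based control laws of Problem \ref{prob:ADRctrl}, where self-adjointness fails; your argument is more elementary and self-contained for the purely diffusive case, and it sidesteps the spectral-mapping subtleties entirely, since for self-adjoint generators the spectral bound trivially equals the growth bound.
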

\begin{proof}
First, we show that the integral of the solution $y(\mathbf{x},t)$ over the domain $\Omega$ remains conserved for any initial condition $y_0 \in L^2_w(\Omega)$. Let $u \in \mathcal{D}(A)$. We define a linear map $ R_t : \mathcal{D}(A) \rightarrow \mathbb{R}$ by
\begin{equation}
R_t u = \int_{\Omega} (\mathbb{T}(t)u - u) d\mathbf{x}
\end{equation}
for some $t>0$. Then, using Green's formula for twice weakly-differentiable functions, we have
\begin{align*}
R_t u &= \int_{\Omega} \int^t_0A\mathbb{T}(s)u_0dsd\mathbf{x} \nonumber \\ 
&= \int^t_0 \int_{\partial \Omega} \mathbf{n} \cdot \nabla(w(\mathbf{y})u(\mathbf{y},s)) d\mathbf{y}ds = 0
\end{align*}
from the boundary condition that is encoded in the definition of $A$ in \eqref{eq:PDGen}. Due to the boundedness of the map $R_t$ by \cite{pedersen2012analysis}[Proposition 2.1.11], this map can be extended to a bounded linear operator from $L^2_w(\Omega)$ to $\mathbb{R}$ (since $\mathcal{D}(A)$ is dense in $L^2_w(\Omega)$). Thus, $R_t$ is in fact the zero map for every $t \geq 0$. Hence, the integral of the solution $\mathbb{T}(t)y_0$ over the domain remains conserved. 

To prove the uniqueness and stability of the equilibrium, we make the following observations. It is well-known that the Neumann Laplacian has a unique one-dimensional linear subspace of constant eigenvectors (constant functions) corresponding to the eigenvalue $0$ \cite{evans1998partial}. Since there is a natural bijective correspondence between eigenvectors of $\Delta_N$ and $A$ for the eigenvalue $0$, this implies that $A$ has a unique one-dimensional subspace of eigenvectors, spanned by the function $1/w$, corresponding to the eigenvalue $0$. Therefore, $0$ is a first-order pole of $A$. The semigroup under consideration is eventually compact by Lemma \ref{ECLemm}, since immediate compactness implies eventual compactness. Additionally, we can choose $c = \frac{\int_\Omega  u d\mathbf{x}} {\int_\Omega 1/w d\mathbf{y}}$ in condition \eqref{eq:stab} since the integral of the solution over the domain must be conserved. Then from the positivity of the operator $-A$, whose spectrum therefore lies in the closed left-half plane, the result follows from the above arguments and \cite{engel2000one}[Chapter V, Corollary 3.3].
\end{proof}
 
As an alternative to the functional analytic methods used here, it is possible to use probabilistic approaches to establish asymptotic stability of the desired distribution. See for example \cite{meyn2012markov}, where such problems have been addressed for discrete-time Markov processes. Similar methods exist in the literature for Markov processes that evolve in continuous time.  Moreover, one can consider several other notions of stability, such as stability in the total variation norm, the Wasserstein distance, and convergence of Ces\`{a}ro means.
 
An additional issue is the well-posedness of process \eqref{eq:Miceq} when the control law $D(\mathbf{x}) = c/F(\mathbf{x})^{1/2}$ is implemented. A sufficient condition for the well-posedness of an ODE or SDE is that the coefficients are locally Lipschitz everywhere.  We note that global Lipschitzness of $c/F(\mathbf{x})^{1/2}$ is ensured whenever $F$ is globally Lipschitz on $\Omega$, positive, and uniformly bounded from below away from zero. Hence, there exists a sufficiently rich class of scalar fields $F$ that can be used to define the control law $D(\mathbf{x})$.

\begin{remark} 
Analogously, we can consider a similar diffusion process on a graph that is closely related to the Metropolis-Hastings algorithm. If $\mathcal{G}=(\mathcal{V},\mathcal{E})$ is a connected graph with undirected edges and $f:\mathcal{V} \rightarrow \mathbb{R}_+$ is a scalar field on the graph, then we can consider the continuous-time Markov chain on the graph whose generator is defined as $-\mathbf{D}\mathbf{L}$, where $\mathbf{L}$ is the Laplacian of the graph and $\mathbf{D}$ is a diagonal matrix with entries $D_{ii} = cf(i)$ for each $i \in \mathcal{G}$ and a fixed $c>0$. Then the evolution of transition probabilities $\mathbf{p} \in \mathbb{R}^{|\mathcal{V}|}$ is given by
\begin{equation}
\dot{\mathbf{p}}(t) = -\mathbf{LD}\mathbf{p}(t)~, \hspace{0.5cm} \mathbf{p}(0) = \mathbf{p}_0~. \nonumber
\end{equation}
We can view the above equation as the discretized approximation of the PDE \eqref{eq:Macdiff} for the case where $\mathcal{G}$ is a lattice graph. Note that the transition probabilities depend only on local information, as in the case of the diffusion process \eqref{eq:Miceq}. It is straightforward to check the invariance of the distribution $\mathbf{\pi} = \frac{\sum_{i \in \mathcal{V}}f(i)}{f}$.
\label{Rem1}
\end{remark}


\subsection{Field Estimation}
\label{FEstAn}


Our method for estimating a scalar field $F:\Omega \rightarrow \mathbb{R}_+$ from observations of agents consists of three steps, which we describe and justify in this section:
\begin{enumerate}
\item{\textit{Convergence}: We assume that all agents know the time parameter $T_1$.  During the time interval $t \in [0, T_1]$, agents  follow the closed-loop coverage control law $\mathbf{X}(t) =   (2/F(\mathbf{X}(t))^{1/2} d\mathbf{W}  + d \mathbf{\psi}(t)$. By the analysis in Section \ref{sec:Coverage}, the agents will converge to the distribution corresponding to $F$.} 
\item{\textit{Dispersion}: During the time interval  $t \in (T_1, T_2]$, the agents perform a homogenous random walk, their positions evolving according to $\mathbf{X}(t) =   \sqrt{(2 c)} d\mathbf{W}  + d \mathbf{\psi}(t)$ for some known $c>0$.}
\item{\textit{Estimation}: During the same time interval $t \in (T_1, T_2]$, an observer collects data $\lbrace y_{\omega} \rbrace$ over some finite partition of the domain of observation $O$, as described in Problem \ref{Pr2}, and solves the optimization problem that we present in this section (Theorem \ref{eq:MainOpt}).}
\end{enumerate}

\vspace{2mm}
We use the following result to justify our method.
\begin{theorem}
Let $\Omega$ be a bounded subset of $\mathbb{R}^n$ with a $C^2$ boundary, $O \subset \Omega$ be an open subset, $H_f$ be a finite-dimensional subspace of $L^2(\Omega)$, $\hat{y} \in L^2(T_1,T_2; L^2(O))$, and $d >0$. Then the following problem is well-posed and has a unique solution: 
\begin{align*}
\min_{u_T \in H_f}  & \|y - \hat{y}\|^2_{L^2(T_1,T_2; L^2(O))}  \\
s.t. \hspace{5mm}  \frac{\partial u}{\partial t} & = d \Delta u \hspace{2mm} in \hspace{2mm} Q, \\
\mathbf{n} \cdot \nabla u & = 0 \hspace{4mm} on \hspace{2mm} \Sigma,  \nonumber \\
\hspace{4mm}      y(t) &= C u(t) \hspace{2mm} \forall  t \in (T_1,T_2] ,\\
\hspace{4mm}      u(T_1) &= u_T,
\end{align*}
\label{eq:MainOpt}
\noindent where $C: L^2(T_1,T_2:L^2(\Omega)) \rightarrow  L^2(T_1,T_2:L^2(O))$ is the observation operator defined as  $y(\mathbf{x},t)= Cu(\mathbf{x},t) = u(\mathbf{x},t)$ for each $(\mathbf{x},t) \in O \times (T_1,T_2]$.
\end{theorem}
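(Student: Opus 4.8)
The plan is to read this as a linear least-squares problem in a Hilbert space and to reduce well-posedness to two facts: the forward observation map is bounded, and it is injective on the finite-dimensional parameter set $H_f$. First I would make the constraint explicit. Since $d\Delta$ with homogeneous Neumann data on a $C^2$ domain is self-adjoint and negative, it generates an analytic (in particular strongly continuous) semigroup $(S(t))_{t\ge0}$ on $L^2(\Omega)$, exactly as in the $w\equiv1$ case of the Preliminaries and Lemma~\ref{ECLemm}; hence each choice $u(T_1)=u_T$ produces a unique mild solution $u(t)=S(t-T_1)u_T$. Composing with the restriction $C$ to $O$ defines the forward map $\Lambda:L^2(\Omega)\to L^2(T_1,T_2;L^2(O))$, $\Lambda u_T=(t\mapsto CS(t-T_1)u_T)$, which is bounded because the semigroup is uniformly bounded on the compact interval $[T_1,T_2]$ and $C$ is a restriction operator of norm at most one. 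The objective then reads $J(u_T)=\|\Lambda u_T-\hat y\|^2_{L^2(T_1,T_2;L^2(O))}$.

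Next I would settle existence, uniqueness, and continuous dependence by a projection argument. As $H_f$ is finite-dimensional, its image $\Lambda(H_f)$ is a finite-dimensional, hence closed, subspace of $L^2(T_1,T_2;L^2(O))$. By the Hilbert-space projection theorem the orthogonal projection $P\hat y$ of the data onto $\Lambda(H_f)$ exists and is the unique best approximation, so the infimum of $J$ is attained and a minimizer is \emph{any} $u_T\in H_f$ with $\Lambda u_T=P\hat y$; existence of a minimizer therefore needs nothing beyond closedness. Uniqueness is exactly the injectivity of $\Lambda|_{H_f}$: if $\Lambda|_{H_f}$ is injective then the Gram form $u_T\mapsto\|\Lambda u_T\|^2$ is positive definite on $H_f$, so $J$ is strictly convex and coercive there, the minimizer $u_T$ is unique, and the solution map $\hat y\mapsto u_T$ is linear and bounded (the composition of the projection onto $\Lambda(H_f)$ with the bounded inverse of $\Lambda|_{H_f}$), which yields continuous dependence on the data and hence well-posedness.

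The analytic core, and the step I expect to be the main obstacle, is thus the injectivity of $\Lambda$, i.e. that observation on $O$ over $(T_1,T_2)$ determines $u_T$; this is the (approximate) observability of the heat equation invoked in the Introduction, equivalent here to a unique continuation statement. Concretely, suppose $\Lambda u_T=0$, so $u(\mathbf{x},t)=0$ for a.e. $(\mathbf{x},t)\in O\times(T_1,T_2)$. Expanding $u_T=\sum_k a_k\phi_k$ in the orthonormal eigenbasis of the Neumann Laplacian, $-\Delta\phi_k=\lambda_k\phi_k$, gives $u(t)=\sum_k a_k e^{-d\lambda_k(t-T_1)}\phi_k$, and the real-analyticity of $t\mapsto u(t)$ supplied by the analytic semigroup lets me separate the distinct time-exponentials: grouping equal eigenvalues into $\psi_\lambda=\sum_{\lambda_k=\lambda}a_k\phi_k$, linear independence of $\{e^{-d\lambda(t-T_1)}\}$ forces $\psi_\lambda=0$ on $O$ for every $\lambda$. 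Since each $\psi_\lambda$ solves the elliptic equation $(\Delta+\lambda)\psi_\lambda=0$ on the connected domain $\Omega$ and vanishes on the open set $O$, the elliptic unique continuation principle forces $\psi_\lambda\equiv0$, whence every $a_k=0$ and $u_T=0$. This establishes injectivity of $\Lambda$ on all of $L^2(\Omega)$, a fortiori on $H_f$, and closes the argument.
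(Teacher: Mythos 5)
Your proof is correct, but it takes a genuinely different route from the paper. The paper's entire proof is a citation: it invokes the approximate observability of the Neumann heat equation from the reference \cite{fernandez2006global}, where that property is obtained via global Carleman inequalities, and it leaves implicit the reduction from observability to well-posedness of the constrained least-squares problem. You supply both halves yourself: first the reduction (boundedness of the forward map $\Lambda$, closedness of $\Lambda(H_f)$ by finite-dimensionality, the projection theorem for existence, and injectivity of $\Lambda|_{H_f}$ for uniqueness and for boundedness of the solution map $\hat y \mapsto u_T$), and second a self-contained proof of the observability/injectivity itself, via eigenfunction expansion of the Neumann Laplacian, time-analyticity of the semigroup orbit, separation of the distinct exponentials, and elliptic unique continuation for $(\Delta+\lambda)\psi_\lambda = 0$. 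What your route buys is transparency and elementarity: it works precisely because the constraint here is the constant-coefficient heat equation, so the spectral decomposition is available. What the paper's route buys is robustness: the Carleman machinery gives quantitative observability estimates and survives variable coefficients and lower-order terms, at the cost of importing a deep result as a black box. Two small points to tighten in your argument: the ``linear independence of $\{e^{-d\lambda(t-T_1)}\}$'' step involves an infinite Dirichlet series, so you should separate coefficients by the standard limiting argument (multiply by the slowest-decaying exponential, let $t\to\infty$, and induct) rather than by finite linear independence; and unique continuation needs $\Omega$ connected, which the theorem's hypothesis (``bounded subset with $C^2$ boundary'') does not state explicitly, though the paper's standing assumption that $\Omega$ is convex supplies it.
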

\noindent This result follows from the approximate observability of the heat equation with Neumann boundary condition \cite{fernandez2006global}. 
For the possibility of extending this result to more general domains, see \cite{apraiz2012observability}.
We note that we have only approximate observability of the heat equation. While $H_f$ can be any finite-dimensional subspace of $L^2(\Omega)$, we cannot replace $H_f$ by $L^2(\Omega)$ (or any infinite-dimensional subspace of $L^2(\Omega)$) and retain a well-posed problem with a unique solution. This requires having exact observability, which is generally only true in the trivial case where $O = \Omega$, that is, the evolution of the process can be observed over the entire domain. 
\begin{remark}
Here and in the following arguments, by treating Problem \ref{Pr2} as a PDE-constrained optimization problem, we are implicitly considering an idealized version of this problem in which $N \rightarrow \infty$ and $\omega$ is taken over all measurable subsets of $O$.
\end{remark} 

\vspace{2mm}

Then we can consider the PDE model
\begin{align*}
 \frac{\partial u}{\partial t} & =  \Delta( D(\mathbf{x},t)^2 u) \hspace{2mm} in \hspace{2mm} Q, \\
\mathbf{n} \cdot \nabla u & = 0 \hspace{4mm} on \hspace{2mm} \Sigma,  \nonumber \\
\hspace{4mm}      y(t) &= C u(t) \hspace{2mm} \forall  t \in (T_1,T_2], \\
\hspace{4mm}     u(\mathbf{x},0) & = u_{0}(\mathbf{x})  \hspace{2mm} in \hspace{2mm} \Omega,  
\end{align*}
where  $D(\mathbf{x},t) = c/F(\mathbf{x})^{1/2}$ over time $t \in [0,T_1]$ for some $c>0$, and $D(\mathbf{x},t) = d>0$ otherwise. From the analysis in Section \ref{sec:Coverage}, we know that $\|u(\mathbf{x},T_1) - F(\mathbf{x})/ \int_{\Omega}F(\mathbf{x}) d\mathbf{x} \| \leq \epsilon(T_1)$, where $\epsilon(T_1)$ is the error between the distribution at time $T_1$ and the desired distribution, for all initial conditions $u_0$ of this PDE model such that $\int_\Omega u_0 d\mathbf{x} =1$ and $u_0 \geq 0$. Moreover, $\epsilon(T_1) \rightarrow 0$ as $T_1 \rightarrow \infty$ from the stability estimate \eqref{eq:stab}. Therefore, by observing the random walks of agents from time $T_1$ to $T_2$, the observer can infer the density of agents at time $T_1$, and hence obtain an approximate estimate of the field $F(\mathbf{x})$ over $\Omega$ up to a proportional constant. Additionally, if the observer has an estimate of $F(\mathbf{x})$ for all $\mathbf{x} \in O$, then the proportional constant can be computed as well. 

\begin{remark}
This technique has a graph analogue, as in Remark \ref{Rem1}. Observability of consensus protocols on communication networks has been well-studied  \cite{mesbahi2010graph} and applied to problems of sensing spatially-distributed parameters  \cite{ ji2007observability,pequito2013optimal,ramachandran2016effect}. In these works, agents are communication nodes of the graph, whereas in our approach, they would be viewed as random-walking agents on the graph.
\end{remark} 

\begin{remark}
The agents will attain the steady-state distribution $\mu_F$ only in infinite time, but the time $T_1$, which is defined {\it a priori}, is necessarily finite.  The agent distribution will converge toward $\mu_F$ at an exponential rate that depends on the underlying scalar field.  Since we do not assume prior knowledge about this field, it is not possible to predict the degree of convergence at time $T_1$, which will affect the error in the subsequent estimate of the field.  Hence, an inaccurate estimate may result if $T_1$ is set to be too small for the agent distribution to have converged closely to $\mu_F$ at that time.
\end{remark}


\vspace{0.5mm}

\section{SIMULATIONS}
\label{sec:Sim}

We validated our coverage approach in two different simulated scenarios. In case $1$, the scalar field is defined as $F_1(\mathbf{x}) = f_1(\mathbf{x})-f_2(\mathbf{x}) + \epsilon$ for all $\mathbf{x} \in \Omega$, where $f_n$, $n=1, 2,$  are given by 
\begin{align*}
f_n(\mathbf{x}) &= \exp \bigg( \frac{-1}{1- \|a_n\mathbf{x} - b_n \|^2} \bigg)  &\mbox{for} \hspace{2mm} \|a_n\mathbf{x} - b_n \|^2 <1, \nonumber \\ 
&= 0 ~~~ \mbox{otherwise.} &
\end{align*}
We set $a_1 = 2, ~a_2 = 6, ~b_1 = 1, ~b_2 = 2,$ and $\epsilon = 0.01$.  The field $F_1(\mathbf{x})$ is shown in the lower right plot of Fig. \ref{fig:cas1}.
In case 2, we used the numerically constructed scalar field $F_2(\mathbf{x})$ that is shown in the lower right plot of Fig. \ref{fig:cas2}.


\begin{figure}[t]
\centering
\includegraphics[trim= 7mm 0mm 0mm 0mm, scale=0.7]{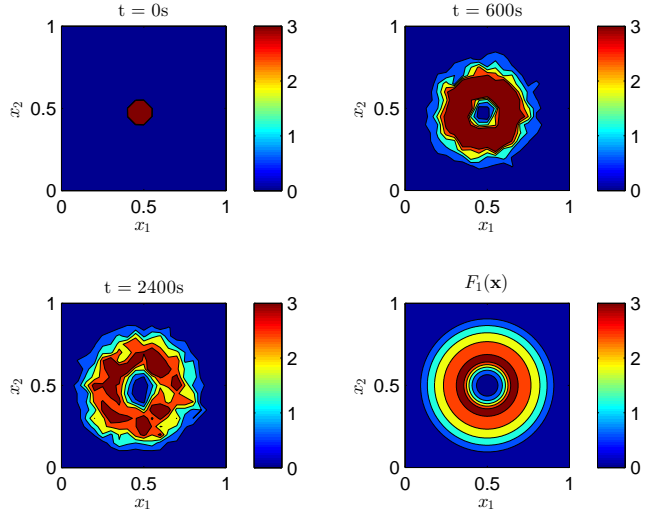}
\caption{Case 1: Simulated agent densities at three times $t$ and the underlying scalar field}
\label{fig:cas1}
\end{figure}
\begin{figure}[t]
\centering
\includegraphics[trim= 7mm 0mm 0mm 0mm, scale=0.7]{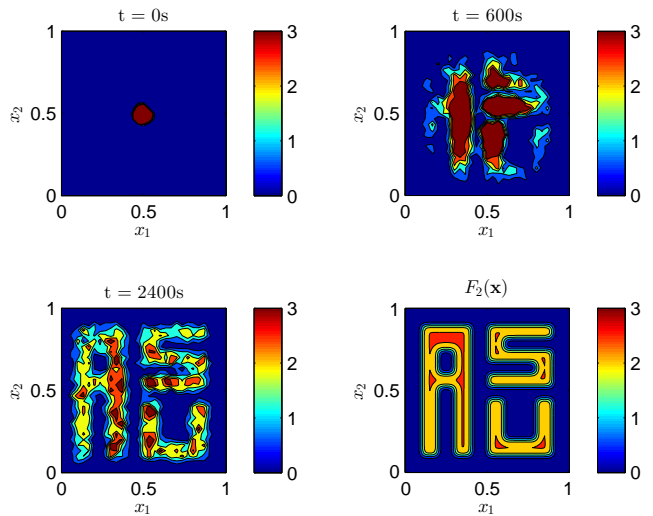}
\caption{Case 2: Simulated agent densities at three times $t$ and the underlying scalar field}
\label{fig:cas2}
\end{figure}

In each case, the diffusion-based feedback control law was chosen to be $D_n(\mathbf{x}) = 10^{-5} / F_n(\mathbf{x})^{1/2}$, $n=1,2$. Since $D_n$ is in $C^{\infty}(\bar{\Omega})$ and is uniformly bounded from below away from zero, it is globally Lipschitz on $\Omega$. For each case, $3000$ agents were simulated on a domain $\Omega = (0,1) \times (0,1)$. The agents were initially distributed as a Gaussian centered at $(0.5,0.5)$.
The stochastic motion of each agent was approximated in discrete time using the standard-form Langevin equation:  
\begin{equation}
\mathbf{X}(t+\Delta t) - \mathbf{X}(t) = (2 D_n^2(\mathbf{X}) \Delta t)^{1/2}~\mathbf{Z}(t), \label{eq:RWPTpos}
\end{equation}
where $\mathbf{Z} \in \mathbb{R}^2$ is a vector of independent, standard normal random variables.  When an agent encounters the boundary, it performs a specular reflection. As shown in Fig. \ref{fig:cas1} and \ref{fig:cas2}, the steady-state swarm density closely matches  the underlying scalar field in each case.

The field estimation algorithm was validated for two example scalar fields on a $1D$ domain, $\Omega = [0,1]$. These fields were defined as $F_1(x) = c_1 (\sin(\pi x) + 0.01)$ and $F_2(x) = c_2 (x^2 + 0.01)$ for all $x \in \Omega$, where $c_1$ and $c_2$ are normalization constants chosen such that the field integrates to $1$ over the domain. The region of measurement was set to $O = (0.7,1)$ in each case. The agent motion was simulated using the numerical approximation \eqref{eq:RWPTpos}. The measurement data $\lbrace y_{\omega} \rbrace$ was collected from both a coarse partition ($\cup_{n \in \mathbb{Z}_+}[\frac{n-1}{10},\frac{n}{10}) \cap O$) and a finer one ($\cup_{n \in \mathbb{Z}_+}[\frac{n-1}{100},\frac{n}{100}) \cap O$). The optimization problem in Theorem \ref{eq:MainOpt} was solved using an ``Optimize-then-Discretize'' approach \cite{pinnau2008optimization}, along with a projected gradient descent method. The objective functional was modified to its regularized version, $\|y-\hat{y}\|^2_2 +  \lambda \|u_T\|^2_2$, where $\lambda$ was chosen to be $0.1$. Since the problem in Theorem \ref{eq:MainOpt} is a convex optimization problem with linear (albeit infinite-dimensional) constraints, it is fairly straightforward to construct the optimality system, which consists of necessary and sufficient conditions associated with the adjoint equation that the optimal solution must satisfy. We exclude the analytical formulation of the gradient here for the sake of brevity. 

The results of the estimation procedure are illustrated in Fig. \ref{fig:sin} and \ref{fig:x2}. The observation data from the fine grid can be seen to yield a more accurate reconstruction of the scalar field than the data from the coarse grid. The estimation procedure performs the best with larger numbers of agents, as would be expected due to the relatively smaller amount of noise in the  data from larger populations. However, it is notable that the method works well, qualitatively at least, with populations of only 100 agents, which yield observation data with very large fluctuations from the mean behavior.

\begin{figure}[t]
\centering
\includegraphics[scale=0.6]{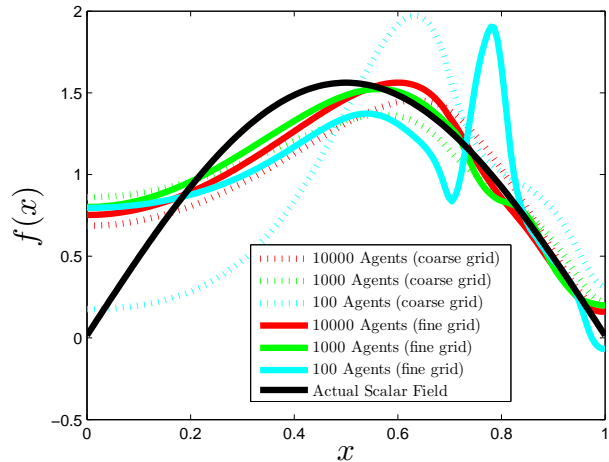}
\caption{$F_1(x)=c_1(\sin(\pi x)+0.01)$}
\label{fig:sin}
\end{figure}
\begin{figure}[t]
\centering
\includegraphics[scale=0.6]{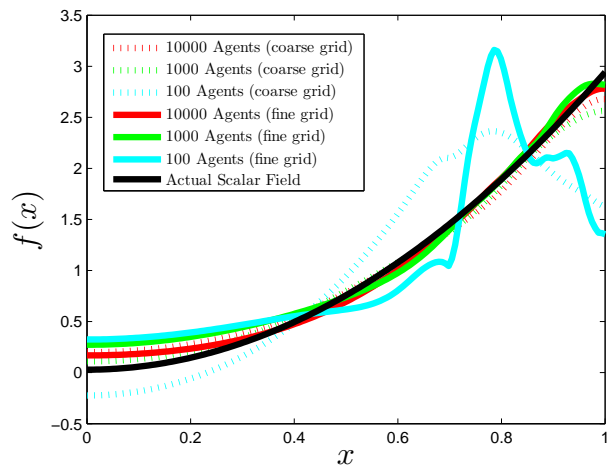}
\caption{$F_2(x)=c_2(x^2+0.01)$}
\label{fig:x2}
\end{figure}

\vspace{2mm}

\section{CONCLUSIONS}
In this work, we have developed a diffusion-based approach to achieving a spatial distribution of swarm activity that matches an underlying scalar field in the case where the agents have only local sensing, heading information, and no global position information or communication. We also presented a method for mapping scalar fields using observations of agents' random walks over a small subset of the domain by exploiting the observability properties of the heat equation and its relation to random walks.

In future work, we will analyze the advection- and reaction-based coverage schemes presented in this paper and compare the relative advantages of each strategy. It would also be useful to numerically compute the rate of convergence of the swarm to a desired distribution using techniques such as spectral approximations \cite{chatelin1983spectral} and sum of squares methods \cite{papachristodoulou2006analysis,meyer2015stability} for systems with polynomial data. Additionally, we will develop coverage strategies for agents with more complex dynamics and cooperative behaviors.  In future work on our field estimation approach, we plan to improve the efficiency of the numerical method for solving the optimization problem. Investigating the numerical well-posedness of controllability and observability problems of the heat equation is a challenge in itself \cite{munch2010numerical}. We will also consider Problem \ref{Pr2} in a more natural setting as a PDE coefficient identification problem \cite{isakov2006inverse}, for which specification of the $T_1$ time parameter would not be required.

\bibliographystyle{plain}

\bibliography{cdcref}

\end{document}